\journal{}
\newcommand{\SA}{\ensuremath{\mathsf{SA}}\xspace}
\newcommand{\LA}{\ensuremath{\mathsf{\lambda}}\xspace}
\newcommand{\LABP}{\ensuremath{\mathsf{\lambda}_{BP}}\xspace}
\newcommand{\LF}{\ensuremath{\mathsf{LF}}\xspace}
\newcommand{\ISA}{\ensuremath{\mathsf{ISA}}\xspace}
\newcommand{\NSV}{\ensuremath{\mathsf{NSV}}\xspace}
\newcommand{\BWT}{\ensuremath{\mathsf{BWT}}\xspace}
\renewcommand{\L}{\ensuremath{L}\xspace}
\newcommand{\etal}{{\it et al.}\xspace}
\newcommand{\our}{{\textsf{BWT-Lyndon}}\xspace}
\newcommand{\nsv}{{\textsf{NSV-Lyndon}}\xspace}
\newcommand{\baier}{\textsf{Baier-Lyndon}\xspace}
\newcommand{\maxlyn}{\textsf{MaxLyn}\xspace}
\newcommand{\p}{\phantom{0}}
\newcommand{\und}{\underline}
\newcommand{\rank}{\mathsf{rank}}
\newcommand{\sel}{\mathsf{select}}
\newcommand{\selo}{\mathsf{selectopen}}
\newcommand{\findc}{\mathsf{selectclose}}
\newcommand{\Oh}{O}
\newcommand{\opa}{ \mbox{\tt\small (}  }
\newcommand{\cpa}{\mbox{\tt\small )} }
\long\def\ignore#1{\vskip 0pt}
\newcommand{\opab}{ \mbox{\tt\small "("}  }
\newcommand{\cpab}{\mbox{\tt\small ")"} }
\newcommand{\qedwhite}{\hfill \ensuremath{\Box}}
\newtheorem{theorem}{Theorem}
\newtheorem{lemma}{Lemma}
\newdefinition{rmk}{Remark}
\newproof{proof}{Proof}
\begin{document}

\begin{frontmatter}

\title{Lyndon Array Construction during Burrows-Wheeler~Inversion}



\author[usp]{Felipe A. Louza\corref{mycorrespondingauthors}}
\cortext[mycorrespondingauthors]{Corresponding author}
\ead{louza@usp.br}
\author[macmaster,murdoch]{W. F. Smyth}
\ead{smyth@mcmaster.ca}
\author[unipo,cnr]{Giovanni Manzini}
\ead{giovanni.manzini@uniupo.it}
\author[unicamp]{Guilherme P. Telles}
\ead{gpt@ic.unicamp.br}

\address[usp]{Department of Computing and Mathematics, University of São Paulo, Brazil}
\address[macmaster]{Department of Computing and Software, McMaster University, Canada}
\address[murdoch]{School of Engineering \& Information Technology, Murdoch University, Australia}
\address[unipo]{Computer Science Institute, University of Eastern Piedmont, Italy}
\address[cnr]{Institute of Informatics and Telematics, CNR, Pisa, Italy}
\address[unicamp]{Institute of Computing, University of Campinas, Brazil}

\begin{abstract}
In this paper we present an algorithm to compute the Lyndon array of a
string~$T$ of length~$n$ as a byproduct of the inversion of the
Burrows-Wheeler transform of $T$. Our algorithm runs in linear time using
only a stack in addition to the data structures used for Burrows-Wheeler
inversion. We compare our algorithm with two other linear-time algorithms for
Lyndon array construction and show that computing the Burrows-Wheeler
transform and then constructing the Lyndon array is competitive compared to
the known approaches. We also propose a new balanced parenthesis
representation for the Lyndon array that uses $2n+o(n)$ bits of space and
supports constant time access. This representation can be built in linear
time using $O(n)$ words of space, or in $O(n\log n/\log\log n)$ time using
asymptotically the same space as $T$.
\end{abstract}

\begin{keyword}
Lyndon array, Burrows-Wheeler inversion, linear time, compressed
representation, balanced parentheses.
\end{keyword}

\end{frontmatter}


\section{Introduction}
\label{s:intro}

Lyndon words were introduced to find bases of the free Lie algebra
\cite{CFL58}, and have been extensively applied in algebra and combinatorics.
The term ``Lyndon array'' was apparently introduced in \cite{Franek2016},
essentially equivalent to the ``Lyndon tree'' of Hohlweg \& Reutenauer
\cite{Hohlweg2003}. Interest in Lyndon arrays has been sparked by the
surprising characterization of runs through Lyndon words by Bannai
\etal~\cite{Bannai2014}, who were thus able to resolve the long-standing
conjecture that the number of runs (maximal periodicities) in any string of
length $n$ is less than $n$.

The Burrows-Wheeler transform (BWT)~\cite{Burrows1994} plays a fundamental
role in data compression and in text
indexing~\cite{Ohlebusch2013,Makinen2015,Navarro2016}.  Embedded into a
wavelet tree, the BWT is a self-index with a remarkable time/space
tradeoff~\cite{Ferragina2000,Grossi2003}.

In this article we introduce a linear time algorithm to construct the Lyndon
array of a string $T$ of length $n$, from an ordered alphabet of
size~$\sigma$, as a byproduct of Burrows-Wheeler inversion, thus establishing
an apparently unremarked connection between BWT and Lyndon array
construction. We compare our algorithm to others in the literature that also
compute the Lyndon array in worst-case linear time. We find that the new
algorithm performs well in practice with a small memory footprint.

Inspired by the inner working of our new algorithm, we propose a
representation of the Lyndon array consisting of a balanced parenthesis
string of length $2n$. Such representation leads to a data structure of size
$2n + o(n)$ bits, supporting the computation of each entry of the Lyndon
array in constant time. We also show that such representation is
theoretically appealing since it can be computed from $T$ in $O(n)$ time
using $O(n)$ words of space, or in $O(n\log n/\log\log n)$ time using $O(n
\log\sigma)$ bits of space.

This article is organized as follows. Section~\ref{s:prelim} introduces
concepts, notation and related work. Section~\ref{s:algorithm} presents our
algorithm and Section~\ref{s:experiments} shows experimental results.
Section~\ref{s:bp} describes our balanced parenthesis representation of the
Lyndon array and two construction algorithms with different time/space
tradeoffs. Section~\ref{s:conclusion} summarizes our conclusions.

\section{Concepts, notation and related work}
\label{s:prelim}

Let $T$ be a string of length $|T|=n$ over an ordered alphabet $\Sigma$ of
size $\sigma$.  The $i$-th symbol of $T$ is denoted by $T[i]$ and the
substring $T[i] T[i+1]\cdots T[j]$ is denoted by $T[i,j]$, for $1 \le i \le j
\le n$. We assume that $T$ always ends with a special symbol $T[n] = \$$,
that doesn't appear elsewhere in $T$ and precedes every symbol in $\Sigma$. A
prefix of $T$ is a substring of the form $T[1,i]$ and a suffix is a substring
of the form $T[i,n]$, which will be denoted by $T_i$.  We use the symbol
$\prec$ for the lexicographic order relation between strings.

The suffix array (\SA)~\cite{Manber1993,Gonnet1992} of a string $T[1,n]$ is
an array of integers in the range $[1, n]$ that gives the lexicographic order
of all suffixes of $T$, such that $T[\SA[1],n] \prec T[\SA[2],n] \prec \cdots
\prec T[\SA[n],n]$.  We denote the inverse of \SA as \ISA,
$\ISA[\SA[i]]=i$. The suffix array can be constructed in linear time using
$O(\sigma)$ additional space~\cite{Nong2013}.

The next smaller value array ($\NSV_A$) defined for an array of integers $A[1,n]$
stores in $A[i]$ the position of the next value in $A[i+1,n]$ that
is smaller than $A[i]$. If there is no value in $A[i+1,n]$ smaller than
$A[i]$ then $\NSV_A[i] = n+1$. Formally, $\NSV_A[i] = \min(\{n+1\}\cup\{j | i <
j \leq n \mbox{ and } A[j] < A[i]\})$. \NSV may be constructed in
linear time using additional memory for an auxiliary stack~\cite{Goto2013}.

\paragraph{Lyndon array}
A string $T$ of length $n>0$ is called a Lyndon
word if it is lexicographically strictly smaller than its
rotations~\cite{CFL58}. Alternatively, if $T$ is a Lyndon word and $T=uv$ is any
factorization of $T$ into non-empty strings, then $u\prec v$.  The Lyndon
array of a string $T$, denoted $\LA_T$ or simply $\LA$ when $T$ is
understood, has length $|T|=n$ and stores at each position $i$ the length of
the longest Lyndon word starting at $T[i]$.

\ignore{The Lyndon array can also be defined to store the end position of the
longest Lyndon word starting at $T[i]$, denoted by $\LA_{end}$.}

Following \cite{Hohlweg2003}, Franek \etal~\cite{Franek2016} have recently
shown that the Lyndon array can be easily computed in linear time by applying
the \NSV computation to the inverse suffix array (\ISA), such that $\LA[i] =
\NSV_{\ISA}[i]-i$, for $1 \leq i \leq n$. Also, in a recent talk surveying
Lyndon array construction, Franek and Smyth~\cite{Smyth2017} quote
an unpublished observation by Cristoph Diegelmann~\cite{Diegelmann2016} that,
in its first phase, the linear-time suffix array construction algorithm by
Baier~\cite{Baier2016} computes a permuted version of the Lyndon array. This
permuted version, called $\LA_{\SA}$, stores in $\LA_{\SA}[i]$ the length of
the longest Lyndon word starting at position $\SA[i]$ of $T$. Thus, including
the BWT-based algorithm proposed here, there are apparently three algorithms
that compute the Lyndon array in worst-case $O(n)$ time. In addition, in
\cite[Lemma 23]{Bannai2014} a linear-time algorithm is suggested that uses
lca/rmq techniques to compute the Lyndon tree. The same paper also gives an
algorithm for Lyndon tree calculation described as being ``in essence'' the
same as \NSV.

\paragraph{Burrows-Wheeler transform}
The Burrows-Wheeler transform (BWT)~\cite{Burrows1994,Adjeroh2008} is a
reversible transformation that produces a permutation $L$ of the original
string $T$ such that equal symbols of $T$ tend to be clustered in $L$.
The BWT can be obtained by adding each circular shift of $T$ as a row of a
conceptual matrix $M'$, lexicographically sorting the rows of $M'$ producing
$M$, and concatenating the symbols in the last column of $M$ to form $L$.
Alternatively, the BWT can be obtained from the suffix array through the
application of the relation $L[i] = T[SA[i]-1]$ if $SA[i]\neq 1$ or $L[i] =
\$$ otherwise.

Burrows-Wheeler inversion, the processing of $L$ to obtain $T$, is based on the
LF-mapping (last-to-first mapping).  Let $c_F$ and $c_L$ be the first and the
last columns of the conceptual matrix $M$ mentioned above.
We have
$LF:\{1,\dots,n\}\rightarrow \{1,\dots,n\}$ such that if $c_L[i]=\alpha$ is the
$k^{th}$ occurrence of a symbol $\alpha$ in $c_L$, then $LF(i)=j$ corresponds to
the position $c_F[j]$ of the $k^{th}$ occurrence of $\alpha$ in $c_F$.

\begin{figure}[t]
\centering
\resizebox{1\textwidth}{!}{
\begin{tabular}{c|c|c|c|c|c|c|c|c|c|l}
  \multicolumn{1}{c}{} & \multicolumn{1}{c}{} & \multicolumn{1}{c}{sorted} & \multicolumn{7}{c}{} & \multicolumn{1}{c}{sorted} \\
  \multicolumn{1}{c}{} & \multicolumn{1}{c}{circular shifts} & \multicolumn{1}{c}{circular shifts} & \multicolumn{7}{c}{} & \multicolumn{1}{c}{suffixes}\\
  \hline
  $i$ &                  & {$F$~~~~~~~$L$} &  \SA & \ISA &$\NSV_{\ISA}$ & \LF & $\LA$  & $\LA_\SA$ & $\L$ & $T[\SA[i],n]$    \\
  \hline
 1  & \texttt{\und{b}anana\$}  & \texttt{\$banana}  & 7 & 5 & 2 & 2 & 1 & 1 & \texttt{a}  & \texttt{~\und{\$}}        \\
 2  & \texttt{\und{an}ana\$b}  & \texttt{a\$banan}  & 6 & 4 & 4 & 6 & 2 & 1 & \texttt{n}  & \texttt{~\und{a}\$}       \\
 3  & \texttt{\und{n}ana\$ba}  & \texttt{ana\$ban}  & 4 & 7 & 4 & 7 & 1 & 2 & \texttt{n}  & \texttt{~\und{an}a\$}     \\
 4  & \texttt{\und{an}a\$ban}  & \texttt{anana\$b}  & 2 & 3 & 6 & 5 & 2 & 2 & \texttt{b}  & \texttt{~\und{an}ana\$}   \\
 5  & \texttt{\und{n}a\$bana}  & \texttt{banana\$}  & 1 & 6 & 6 & 1 & 1 & 1 & \texttt{\$} & \texttt{~\und{b}anana\$}  \\
 6  & \texttt{\und{a}\$banan}  & \texttt{na\$bana}  & 5 & 2 & 7 & 3 & 1 & 1 & \texttt{a}  & \texttt{~\und{n}a\$}      \\
 7  & \texttt{\und{\$}banana}  & \texttt{nana\$ba}  & 3 & 1 & 8 & 4 & 1 & 1 & \texttt{a}  & \texttt{~\und{n}ana\$}    \\
  \hline
\end{tabular}
}
\caption{Circular shifts, sorted circular shifts, \SA, \ISA, $\NSV_{\ISA}$, \LF, $\LA_\SA$, \LA,
  $L$ and the sorted suffixes of $T=$ \texttt{banana\$}.}
\label{fig:bwt}
\end{figure}

The $LF$-mapping can be pre-computed in an array \LF of integers in the
range $[1,n]$.  Given an array of integers $C$ of length $\sigma$ that stores
in $C[\alpha]$ the number of symbols in $T$ strictly smaller than $\alpha$,
\LF can be computed in linear time using $O(\sigma \log n)$ bits of
additional space~\cite[Alg.  7.2]{Ohlebusch2013}.  Alternatively, $LF(i)$ can
be computed on-the-fly in $O(\log \sigma)$ time querying a wavelet
tree~\cite{Grossi2003} constructed for $c_L$. Given the \BWT \L and the \LF
array, the Burrows-Wheeler inversion can be performed in linear time~\cite[Alg.
7.3]{Ohlebusch2013}.

Figure~\ref{fig:bwt} shows the circular shifts, the sorted circular shifts,
the arrays \SA, \ISA, $\NSV_{\ISA}$, \LF, \LA, $\LA_\SA$, the \BWT \L and the
sorted suffixes of $T=$ \texttt{banana\$}.  The longest Lyndon words starting
at each position $i$ and $\SA[i]$ are underlined in the first and last
columns of Figure~\ref{fig:bwt}.

\section{From the BWT to the Lyndon array}\label{s:algorithm}

Our starting point is the following characterization of the Lyndon array.

\begin{lemma}\label{l:suffixes}
Let $j$ be the smallest position in $T$ after position $i<n$ such that suffix
$T[j,n]$ is lexicographically smaller than suffix $T[i,n]$; that is, $j =
\min\{k | i < k \le n \mbox{ and } T[k,n] \prec T[i,n]\}$. Then the length of
the longest Lyndon word starting at position $i$ is $\LA[i] = j-i$.  If $i=n$
then $\LA[i]=1$.
\end{lemma}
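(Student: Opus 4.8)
The plan is to prove both directions of the equality $\LA[i] = j - i$ by connecting the Lyndon property to the suffix-comparison characterization of $j$. Recall the standard fact that a non-empty string $w$ is a Lyndon word if and only if $w$ is strictly smaller than each of its proper suffixes; equivalently, writing $w = T[i,\ell]$, we need $T[i,\ell] \prec T[k,\ell]$ for every $k$ with $i < k \le \ell$. The key technical observation I would use is that comparing the finite substrings $T[i,\ell]$ and $T[k,\ell]$ (for $i<k\le\ell$) is closely related to comparing the full suffixes $T[i,n]$ and $T[k,n]$: if $T[i,n] \prec T[k,n]$ then either the comparison is already decided within the first $\ell-k+1$ symbols, giving $T[i,\ell] \prec T[k,\ell]$ directly, or $T[k,n]$ is a proper prefix-extension situation that still forces $T[i,\ell]$ to be a (non-strict) prefix of $T[k,\ell]$, hence $\preceq$; a little care with the sentinel $\$$ at position $n$ handles the prefix edge cases since $\$$ is unique and minimal.

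First I would handle the degenerate case $i = n$: the only string starting at position $n$ is $T[n] = \$$, which is trivially a Lyndon word of length $1$, so $\LA[n] = 1$. Next, for $i < n$, let $j$ be as defined in the statement; note $j$ exists and $j \le n$ because $T[n,n] = \$ \prec T[i,n]$ (the sentinel is strictly smallest), so the set over which the minimum is taken is non-empty. I will show (a) $T[i, j-1]$ is a Lyndon word, and (b) no longer Lyndon word starts at $i$, i.e. $T[i, \ell]$ is not Lyndon for any $\ell \ge j$.

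For part (a): take any $k$ with $i < k \le j-1$. By minimality of $j$ we have $T[k,n] \not\prec T[i,n]$, and since $k \ne i$ the suffixes are distinct, so $T[i,n] \prec T[k,n]$. I then argue this lexicographic inequality, restricted to the windows ending at position $j-1$, yields $T[i,j-1] \prec T[k,j-1]$: the mismatch position (or prefix relationship) witnessing $T[i,n]\prec T[k,n]$ occurs at an offset of at most $(j-1) - k$ from the start of $T[k,n]$, because the suffix $T[i,n]$ is at least as long as $T[k,n]$ and the first point of difference cannot lie beyond the end of the shorter window $T[k,j-1]$ — here is where I invoke the sentinel to rule out $T[k,n]$ being a proper prefix of $T[i,n]$ that extends past $j-1$. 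Hence $T[i,j-1]$ is strictly smaller than each of its proper suffixes, so it is Lyndon.

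For part (b): suppose $\ell \ge j$ and consider the proper suffix $T[j, \ell]$ of $T[i,\ell]$. By definition of $j$ we have $T[j,n] \prec T[i,n]$. I claim this forces $T[j,\ell] \preceq T[i,\ell]$ (not strictly greater), which already shows $T[i,\ell]$ is not Lyndon, since a Lyndon word must be strictly smaller than all its proper suffixes. Indeed, the first mismatch between $T[j,n]$ and $T[i,n]$ occurs within the first $\ell - j + 1$ characters of $T[j,n]$ (again using the sentinel so that neither suffix is a proper prefix of the other past that window, or handling the prefix case where $T[j,n]$ would be a prefix of $T[i,n]$ — impossible by length/sentinel considerations), so the inequality survives truncation: $T[j, \ell] \prec T[i, \ell]$, contradicting the Lyndon property of $T[i,\ell]$. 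Combining (a) and (b), the longest Lyndon word starting at $i$ has length exactly $j - i$, i.e. $\LA[i] = j - i$.

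The main obstacle is the careful bookkeeping in parts (a) and (b) around \emph{prefix} relationships between suffixes and their truncations to finite windows: one must verify that the position witnessing a strict inequality between two full suffixes is never "lost" when both are cut down to the relevant finite windows. The uniqueness and minimality of the sentinel $T[n] = \$$ is precisely what guarantees this — no suffix of $T$ is a prefix of another — so the strict comparison is always decided at a genuine character mismatch at a bounded offset, and truncation to any window containing that offset preserves the verdict. I expect this to be the one place where the argument needs to be spelled out explicitly rather than waved through.
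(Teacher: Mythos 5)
Your overall strategy is the same as the paper's: reduce the Lyndon property of a window $T[i,\ell]$ to lexicographic comparisons of full suffixes (via minimality of $j$ for the positive direction, via the definition of $j$ for the negative one) and argue that these comparisons survive truncation to the window. You use the ``smaller than every proper suffix'' characterization where the paper uses the ``$u\prec v$ for every factorization'' one; these are equivalent and the arguments are parallel. Your part (b) is in fact spelled out more completely than the paper's, which only verifies that $T[i,j]$ itself is not Lyndon: your observation that $T[j,\ell]\preceq T[i,\ell]$ for \emph{every} $\ell\ge j$ is the right way to finish, and there the prefix case (mismatch beyond the window, so that $T[j,\ell]$ is a proper prefix of $T[i,\ell]$) falls on the harmless side of the inequality, so the non-strict conclusion suffices.

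There is, however, a genuine gap at exactly the step you flag as delicate, and the reason you offer for it is not the right one. In part (a) you must show that for $i<k\le j-1$ the first mismatch between $T[i,n]$ and $T[k,n]$ occurs at an offset $m$ with $k+m\le j-1$. This is not optional bookkeeping: if instead $m\ge j-k$, then $T[k,j-1]$ would be a proper prefix of $T[i,j-1]$, giving $T[k,j-1]\prec T[i,j-1]$ and \emph{refuting} the Lyndon property rather than confirming it. You attribute the bound to the sentinel, but the sentinel only guarantees that some mismatch exists; it says nothing about where it lies, and ``$T[i,n]$ is at least as long as $T[k,n]$'' is no justification either. The bound is true, but it follows from the minimality of $j$, used a second time: suppose $m=\mathrm{lcp}(T[i,n],T[k,n])\ge j-k$, so $T[i+m]<T[k+m]$. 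Put $k'=i+(j-k)$; then $i<k'<j$, and shifting the comparison by $j-k$ gives $T[k'+s]=T[i+(j-k)+s]=T[k+(j-k)+s]=T[j+s]$ for all $s<m-(j-k)$ while $T[k'+(m-(j-k))]=T[i+m]<T[k+m]=T[j+(m-(j-k))]$, hence $T[k',n]\prec T[j,n]\prec T[i,n]$, contradicting the choice of $j$. (To be fair, the paper's own one-line step ``$T[h,n]\succ T[i,n]$, hence $v\succ u$'' glosses over the same issue.) With this lemma supplied, your proof is complete.
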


\begin{proof} For $i<n$ let $j$ be defined as above and $w=T[i,j-1]$. If
$w=uv$ then $\exists h$, $i<h<j$, such that $u=T[i,h-1]$ and $v=T[h,j-1]$.
Since $h<j$ it follows that $T[h,n]\succ T[i,n]$, hence $v \succ u$ and $T[i,j-1]$ is a
Lyndon word. In addition, $T[j,j] \prec T[j,n] \prec T[i,n]$, hence $T[j]
\leq T[i]$ and $T[i,j]$ is not a Lyndon word. \qedwhite
\end{proof}

The above lemma is at the basis of the algorithm by Franek
\etal~\cite{Franek2016} computing $\LA[i]$ as $\NSV_{\ISA}[i]-i$. Since
$\ISA[i]$ is the lexicographic rank of $T[i,n]$, $j=\NSV_{\ISA}[i]$ is
precisely the value used in the lemma. In this section, we use the
relationship between $LF$-mapping and $\ISA$ to design alternative
algorithms for Lyndon array construction. Since $\ISA[n]=1$, and $LF(\ISA[i])
= \ISA[i-1]$ it follows that $\ISA[i] = LF^{n-i}(n)$ where $LF^j$ denotes the $LF$ map
iterated $j$ times.

Given the \BWT \L and the $LF$ mapping our algorithm computes $T$ and the
Lyndon array $\lambda$ from right to left. Briefly, our algorithm finds,
during the Burrows-Wheeler inversion, for each position $i=n, n-1, \dots, 1$,
the first suffix $T[j,n]$ that is smaller than $T[i,n]$ and using
Lemma~\ref{l:suffixes} it computes $\LA[i] = j-i$.

Starting with $i=n$ and an index ${\it pos}=1$ in the BWT, the algorithm
decodes the BWT according to $LF(pos)$, keeping the visited positions whose
indexes are smaller than {\it pos} in a stack. The visited positions indicate
the suffix ordering: a suffix visited at position $i$ is lexicographically
smaller than all suffixes visited at positions $j>i$. The stack stores pairs
of integers $\big<pos,step\big>$ corresponding to each visited position $pos$
in iteration $step$.  The stack is initialized by pushing $\big<-1,0\big>$.
The complete pseudo-code appears in Algorithm~\ref{a:lyndon-bwt}. Note that
lines 1, 2, 7, 8, 15 and 16 are exactly the lines from the Burrows-Wheeler
inversion presented in~\cite[Alg. 7.3]{Ohlebusch2013}.

An element $\big<pos,step\big>$ in the stack represents the suffix
$T[n-step+1,n]$ visited in iteration $step$.  At iteration $step$ the
algorithm pops suffixes that are lexicographically larger than the current
suffix $T[n-step+1,n]$.  Consequently, at the end of the while loop, the top
element represents the next suffix (in text order) that is smaller than
$T[n-step+1,n]$ and $\LA[step]$ is computed at line~\ref{a:lambda}.

\begin{algorithm}[t]
\SetNlSty{textbf}{}{} \SetAlgoLined \SetCommentSty{mycommfont}
\KwData{$L[1,n]$ and $\LF[1,n]$} \KwResult{$T[1,n]$ and $\lambda[1,n]$}

$T[n] \leftarrow \$$

$pos \leftarrow 0$

$\lambda[n] \leftarrow 1$

$Stack \leftarrow \emptyset$

$Stack.push(\big<-1, 0\big>)$

$step \leftarrow 1$

\For{$i \leftarrow n-1\mbox{ \bf downto }1$}{

        $T[i] \leftarrow \L[pos]$

        \While{$Stack.top().pos > pos$}{
                $Stack.pop()$
        }

        $\lambda[i] \leftarrow step - Stack.top().step$\label{a:lambda}

        $Stack.push(\big<pos, step\big>)$

        $step \leftarrow step+1$

        $pos \leftarrow LF[pos]$
} \caption{Lyndon array construction during Burrows-Wheeler inversion}
\label{a:lyndon-bwt}
\end{algorithm}

\paragraph{Example}
Figure~\ref{f:example} shows a running example of our algorithm to compute
the Lyndon array for string $T=\texttt{banana\$}$ during its Burrows-Wheeler
inversion.
Before {\it step} is set to $1$ (lines 1--6) $\$$ is decoded at position $n$ and the
stack is initialized with the end-of-stack marker $\big<-1, 0\big>$. The
first loop iteration (lines 7--15) decodes $\texttt{a}$ and finds out that
the stack is empty. Then $\LA[6]=1$, the pair $\big<1, 1\big>$ is pushed on
the stack and $pos=LF[1]=2$.

At the second iteration $\texttt{n}$ is decoded and the algorithm checks if
the suffix at the top of the stack ($\texttt{a\$}$) is larger then the
current suffix ($\texttt{na\$}$). The algorithm does not pop the stack
because there is no suffix lexicographically larger than the current one.
Then $\LA[5] = step-Stack.top().step = 2-1 = 1$. The pair $\big<6, 2\big>$ is
pushed on the stack.

At the third iteration $\texttt{a}$ is decoded. The top element, representing
suffix $\texttt{na\$}$, is popped since it is larger then the current suffix
$\texttt{ana\$}$. Then $\LA[4] = step-Stack.top().step = 3-1 = 2$ and the
pair $\big<3, 3\big>$ is pushed. The next iterations proceed in a similar
fashion.

\begin{figure}[t]
\centering
\includegraphics[width=1\textwidth]{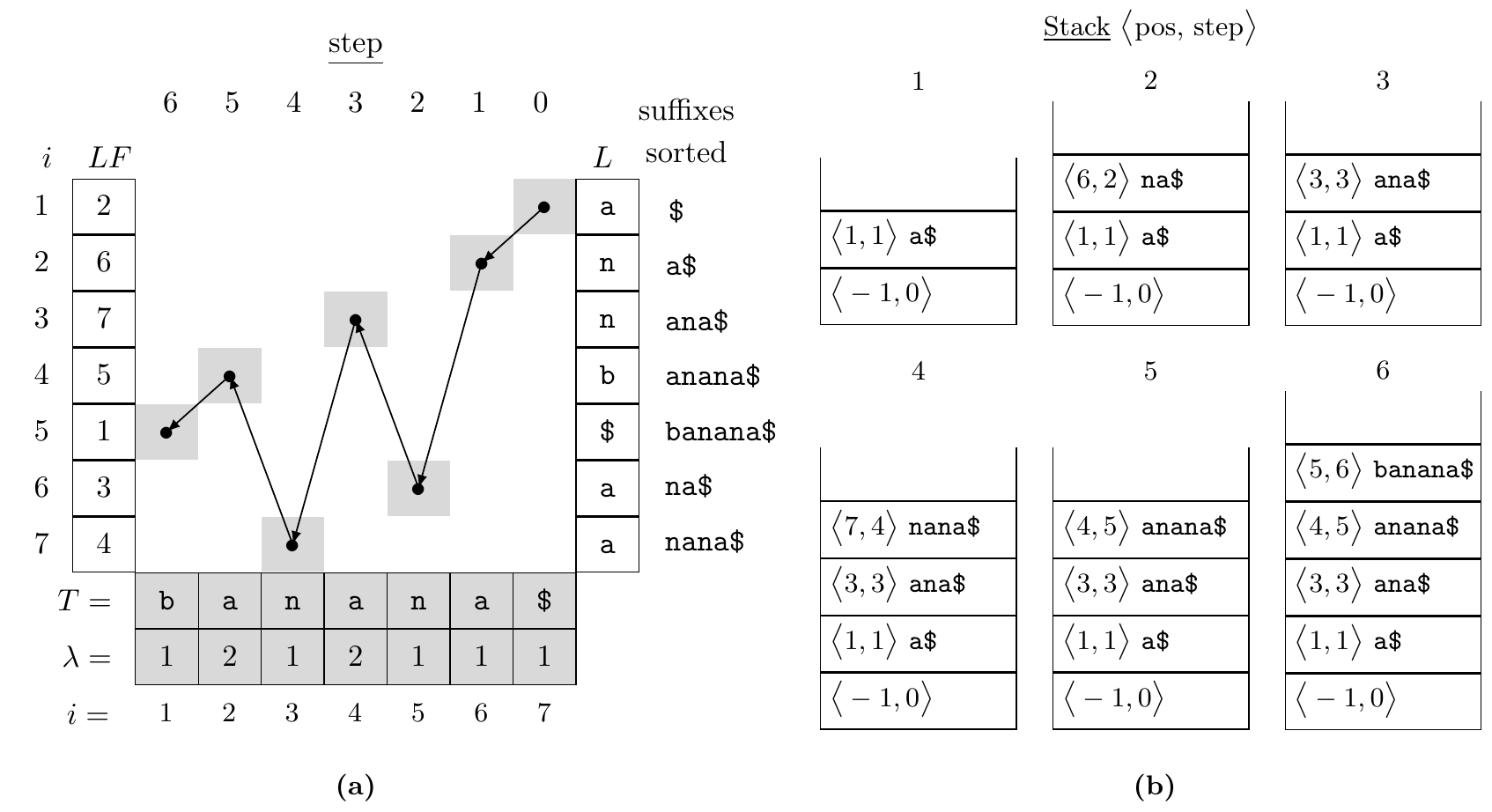}
\caption{Example of our algorithm in the string $T=\texttt{banana\$}$.
Part (a) shows the algorithms steps from right to left.
The arrows illustrate the order in which suffixes are visited
by the algorithm, following the LF-mapping.
Part (b) shows the Stack and the corresponding suffixes at
the end of each step of the algorithm.
}
\label{f:example}
\end{figure}

\begin{lemma}\label{l:lyn_bwt}
Algorithm~\ref{a:lyndon-bwt} computes the text $T[1,n]$ and its Lyndon
array~$\LA[1,n]$ in $\Theta(n)$ time using $O(n)$  words of space.
\end{lemma}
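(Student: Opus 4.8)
The plan is to verify three things about Algorithm~\ref{a:lyndon-bwt}, in this order: (1) it reconstructs $T$ correctly; (2) the array it fills equals the Lyndon array $\LA$; and (3) it runs in $\Theta(n)$ time using $O(n)$ words of extra space. Claim~(1) is almost inherited: the instructions of Algorithm~\ref{a:lyndon-bwt} that touch $T$ and $pos$ are exactly those of the Burrows--Wheeler inversion of~\cite[Alg.~7.3]{Ohlebusch2013}. Recalling why that procedure works, $pos$ is initialized to the \SA-rank of the suffix $T_n=\$$ and updated by $pos\leftarrow\LF[pos]$, and a short induction shows that the value of $pos$ at the iteration emitting $T[i]$ is the \SA-rank of the suffix $T_{i+1}$, whence $\L[pos]=T[\SA[pos]-1]=T[i]$. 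Two byproducts will be reused below: the iterations sweep the positions $n,n-1,\dots,1$ of $T$ from right to left, with $pos$ always holding the \SA-rank (equivalently, the lexicographic rank) of a suffix fixed by how many $\LF$-steps have been taken; and the counter $step$ time-stamps this sweep, so that the step values $s_a,s_b$ recorded for positions $a,b$ satisfy $s_a-s_b=b-a$.

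Claim~(2) is the heart of the proof. I would establish by induction on the iterations the following invariant on $Stack$, disregarding the sentinel $\langle-1,0\rangle$: at the point where line~\ref{a:lambda} executes while handling position $i$, the entries of $Stack$ read from bottom to top have strictly increasing $pos$-fields---equivalently, they correspond to suffixes in strictly increasing lexicographic order---and they are precisely the right-to-left lexicographic minima among the suffixes already swept; that is, a suffix $T_m$ with $m>i$ is in $Stack$ iff $T_m\prec T_{m'}$ for every $m'$ with $i<m'<m$. This is the monotone-stack invariant that makes the textbook linear-time \NSV\ computation~\cite{Goto2013} correct, realized here over \SA-ranks rather than over the raw values of $\ISA$, and it is preserved because the \textbf{while} loop removes exactly the stacked suffixes that are lexicographically larger than the suffix at position $i$ and the ensuing push adds that suffix on top. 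Granting the invariant, once the \textbf{while} loop stops the top of $Stack$ is the suffix at the least position $j>i$ with $T_j\prec T_i$, i.e.\ $j=\NSV_{\ISA}[i]$; the time-stamp property then yields $step-Stack.top().step=j-i$, and since $\LA[i]=\NSV_{\ISA}[i]-i$---which is exactly Lemma~\ref{l:suffixes}, the $j$ of that lemma being $\NSV_{\ISA}[i]$---line~\ref{a:lambda} sets $\lambda[i]=\LA[i]$. The boundary cases match too: $\LA[n]$ is assigned directly, in agreement with the $i=n$ clause of Lemma~\ref{l:suffixes}, and the sentinel, which can never be popped because its $pos$-field is smaller than every real rank, accounts for the base of the stack, consistently with $T_n=\$$ being the unique minimum suffix.

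Claim~(3) is then routine. The \textbf{for} loop runs $n-1$ times, and each pass does $O(1)$ work plus some $Stack.pop()$ calls; since each pass does exactly one $Stack.push()$, the total number of pops over the whole run is at most $n-1$, so the running time is $O(n)$, and $\Omega(n)$ is forced by the size of the output. For space, apart from the inputs $\L[1,n]$ and $\LF[1,n]$ and the outputs $T[1,n]$ and $\lambda[1,n]$, all occupying $\Theta(n)$ words, the only auxiliary structure is $Stack$, which holds at most $n$ pairs of integers in $[-1,n]$ and so takes $O(n)$ words.

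The step I expect to demand the most care is the index bookkeeping tying (2) together: one must pin down the exact relationship between the loop variable, the variable $pos$ viewed as a suffix rank, the counter $step$, and the text positions that the stack entries refer to, and then check both that comparing the integer $pos$-fields on the stack genuinely implements lexicographic comparison of the corresponding suffixes and that $step-Stack.top().step$ coincides with $\NSV_{\ISA}[i]-i$. Once those identifications are settled, the inductive correctness argument is a short case analysis and the complexity claims are a standard amortization together with the bound on the stack height.
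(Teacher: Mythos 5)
Your part~(3) reproduces, with a little more care, essentially all that the paper's own proof contains: the paper's argument is only the resource analysis (each text position is pushed once, so the number of stack operations is $O(n)$; the space is dominated by $\LF$, $\LA$ and the stack), with the correctness of the computed $\lambda$ relegated to the informal discussion before the lemma. Your attempt to actually prove correctness via the monotone-stack / right-to-left-minima invariant and Lemma~\ref{l:suffixes} is the right idea and goes beyond the paper.

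However, the index bookkeeping that you explicitly defer is precisely where the argument does not close, and as written your parts~(1) and~(2) contradict each other. In part~(1) you correctly establish that during the iteration with loop index $i$ the variable $pos$ holds the rank of $T_{i+1}$ (this is forced by $T[i]=\L[pos]=T[\SA[pos]-1]$, since the update $pos\leftarrow\LF[pos]$ is the \emph{last} statement of the loop body), so the pair pushed at that iteration carries $\ISA[i+1]$ and time stamp $step=n-i$. Consequently the \textbf{while} loop at iteration $i$ pops the stacked suffixes larger than $T_{i+1}$, not $T_i$; the surviving top sits at position $j=\NSV_{\ISA}[i+1]$; and line~\ref{a:lambda} evaluates to $(n-i)-(n-j+1)=j-(i+1)=\LA[i+1]$, not $\LA[i]$ as you assert when you write that the top is at ``the least position $j>i$ with $T_j\prec T_i$'' and that $step-Stack.top().step=j-i$. (A two-character check: for $T=\texttt{ab\$}$ the literal pseudocode returns $\lambda[1]=1$ while $\LA[1]=2$; the worked example on \texttt{banana\$} likewise pops and obtains $\LA[4]=2$ only because it implicitly compares against $\ISA[4]$, whereas comparing against $\ISA[5]$ pops nothing and yields $1$.) To complete the proof you must either argue that the intended reading of the algorithm advances $pos$ to $\LF[pos]=\ISA[i]$ \emph{before} the stack is consulted --- after which your formulas in part~(2) are correct --- or concede that the pseudocode as printed fills $\lambda[i]$ with $\LA[i+1]$ and never produces $\LA[1]$. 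A smaller slip of the same kind: your invariant, quantified over all $m>i$, would place $T_{i+1}$ on the stack at the moment line~\ref{a:lambda} executes, but that element is pushed only afterwards.
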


\begin{proof}
Since each instruction takes constant time, the running time is proportional
to the number of stack operations, which is $O(n)$ since each text position
is added to the stack exactly once. The space usage is dominated by the
arrays $LF$, \LA, and by the stack that use $O(n)$ words in total. \qedwhite
\end{proof}

\ignore{ It is easy to see that the maximum number of entries simultaneously
on the stack is the length of the longest decreasing subsequence $\ISA[i_1] >
\ISA[i_2] > \cdots > \ISA[i_k]$ with $i_1 < i_2 < \cdots < i_k$ and with the
additional restriction that for $j=1,\ldots,k$, if $i_1 < h < i_j$ then
$\ISA[h] > \ISA[i_j]$. If we can assume $\ISA$ is a random permutation, the
maximum stack size is $O(\sqrt{n})$ on average (see~\cite{AD99}).
Unfortunately, in the worst case, eg when $\ISA = (n,n-1,\ldots,1])$, the
maximum stack size is $n$.}

\section{Experiments}\label{s:experiments}

In this section we compare our algorithm with the
linear time algorithms of Hohlweg and
Reutenauer~\cite{Hohlweg2003,Franek2016} (\nsv) and Baier~\cite{Baier2016}
(\baier). All algorithms were adapted to compute only the Lyndon array \LA
for an input string $T[1,n]$.
In order to compare our solution with the others, we compute the suffix array
\SA for the input string $T$, then we obtain $L$ and the \LF array, and
finally we construct the Lyndon array during Burrows-Wheeler inversion
(Algorithm~\ref{a:lyndon-bwt}). This procedure will be called \our. We
used algorithm SACA-K~\cite{Nong2013} to construct \SA in $O(n)$ time using
$O(\sigma)$ working space. $\LA[1,n]$ was computed in the same space as
$\SA[1,n]$ (overwriting the values) both in \our and in \nsv.

We implemented all algorithms in ANSI C.  The source code is publicly
available at \url{https://github.com/felipelouza/lyndon-array}. The
experiments were executed on a 64-bit Debian GNU/Linux 8 (kernel 3.16.0-4)
system with an Intel Xeon Processor E5-2630 v3 20M Cache $2.40$-GHz, $384$~GB
of internal memory and a $13$ TB SATA storage.  The sources were compiled by
GNU GCC version 4.9.2, with the optimizing option -O3 for all algorithms. The
time was measured using the \texttt{clock()} function of C standard libraries
and the peak memory usage was measured using malloc\_count
library\footnote{\url{http://panthema.net/2013/malloc\_count}}.

We used string datasets from Pizza \& Chili\footnote{\url{https://pizzachili.dcc.uchile.cl/}}
as shown in the first three columns of Tables~\ref{t:exp} and~\ref{t:steps}.
The datasets
einstein-de, kernel, fib41 and cere are highly repetitive texts
The dataset english.1gb is the first 1GB of the original english dataset. In
our experiments, each integer array of length $n$ is stored using $4n$ bytes,
and each string of length $n$ is stored using $n$ bytes.

Table~\ref{t:exp} shows the running time (in seconds), the peak space memory
(in bytes per input symbol) and the working space (in GB) of each algorithm.

\paragraph{Running time}
The fastest algorithm was \baier, which overall spent about two-thirds of the time
required by \our, though the timings were much closer for larger alphabets.
\nsv was slightly faster than \our, requiring about $81\%$ of the time spent
by \our on average.

\paragraph{Peak space}
The smallest peak space was obtained by \our and \nsv, which both use
slightly more than $9n$~bytes. \our uses $9n$ bytes to store the
string $T$ and the integer arrays \SA and \LF, plus the space used by the
stack, which occupied about $11$ KB in all experiments, except for dataset
cere, in which the stack used $261$ KB. The strings $L[1,n]$ and $T[1,n]$ are
computed and decoded in the same space. \nsv also requires $9n$ bytes to
store the string $T$ and the integer arrays \SA and \ISA, that plus the space
of the stack used to compute \NSV~\cite{Goto2013}, which used exactly the
same amount of memory used by the stack of \our. The array \NSV is computed
in the same space as \ISA.
\baier uses
$17n$ bytes to store $T$, \LA and three auxiliary integer arrays of size $n$.

\paragraph{Working space}
The working space is the peak space not counting the space used by the
input string $T[1,n]$ and the output array $\LA[1,n]$ ($5n$ bytes).  The
working space of \our and \nsv were by far the smallest in all experiments.
Both algorithms use about $41\%$ of the working space used by \baier.
For dataset proteins, \our and \nsv use $7.72$ GB less memory than \baier.

\begin{table}[t]
\centering
\caption{
Experiments with Pizza \& Chili datasets. 
The datasets einstein-de, kernel, fib41 and cere are highly repetitive texts.
The running time is shown in seconds. 
The peak space is given in bytes per input symbol.
The working space is given in GB.
}
\label{t:exp}
\resizebox{1\textwidth}{!}{
\begin{tabular}{lrr||ccc||ccc||ccc}
\toprule
 &        &       & \multicolumn{3}{c||}{running time}            & \multicolumn{3}{c||}{peak space}               & \multicolumn{3}{c}{working space}              \\
 &        &       & \multicolumn{3}{c||}{[secs]}            & \multicolumn{3}{c||}{[bytes/n]}               & \multicolumn{3}{c}{[GB]}              \\
            & $\sigma$ & $n/2^{20}$     & \rotatebox{90}{\our} & \rotatebox{90}{\nsv} & \rotatebox{90}{\baier~~}   & \rotatebox{90}{\our}    & \rotatebox{90}{\nsv} & \rotatebox{90}{\baier} & \rotatebox{90}{\our}    & \rotatebox{90}{\nsv} & \rotatebox{90}{\baier} \\
\hline
sources     & 230    & 201   & \p68 & \p55 & \p\textbf{57} & \textbf{9} & \textbf{9}  & 17 & \textbf{0.79} & \textbf{0.79} & \p2.36    \\
dblp        & 97     & 282   &  104 & \p87 & \p\textbf{90} & \textbf{9} & \textbf{9}  & 17 & \textbf{1.10} & \textbf{1.10} & \p3.31    \\
dna         & 16     & 385   &  198 &  160 &  \textbf{113} & \textbf{9} & \textbf{9}  & 17 & \textbf{1.50} & \textbf{1.50} & \p4.51    \\
english.1gb & 239    & 1,047 &  614 &  504 &  \textbf{427} & \textbf{9} & \textbf{9}  & 17 & \textbf{4.09} & \textbf{4.09} & 12.27     \\
proteins    & 27     & 1,129 &  631 &  524 &  \textbf{477} & \textbf{9} & \textbf{9}  & 17 & \textbf{4.41} & \textbf{4.41} & 13.23     \\
\hdashline
einstein.de & 117    & 88    & \p36 & \p32 & \textbf{\p25} & \textbf{9} & \textbf{9}  & 17 & \textbf{0.35} & \textbf{0.35} & \p1.04    \\
kernel      & 160    & 246   &  100 & \p75 & \textbf{\p73} & \textbf{9} & \textbf{9}  & 17 & \textbf{0.96} & \textbf{0.96} & \p2.88    \\
fib41       & 2      & 256   &  120  & \p93 & \textbf{\p18}& \textbf{9} & \textbf{9}  & 17 & \textbf{1.00} & \textbf{1.00} & \p2.99    \\
cere        & 5      & 440   &  215  & 169 &  \textbf{114} & \textbf{9} & \textbf{9}  & 17 & \textbf{1.72} & \textbf{1.72} & \p5.16    \\
\bottomrule
\end{tabular}
}
\end{table}

\paragraph{Steps (running time)}
Table~\ref{t:steps} shows the running time (in seconds) for each step of
algorithms \our and \nsv.
Step 1, constructing \SA, is the most time-consuming part of both algorithms,
taking about $80\%$ of the total time. Incidentally, this means that if
the input consists of the BWT rather than $T$, our algorithm would clearly be
the fastest. In Step 2, computing \BWT is faster than computing \ISA since
$\L[i]=T[\SA[i]-1]$ is more cache-efficient than $\ISA[\SA[i]]=i$.
Similarly in Step 3, computing \LF is more
efficient than computing \NSV~\cite{Goto2013}.
However, Step 4 of \our, which computes \LA during the Burrows-Wheeler inversion, is
sufficiently slower (by a factor of $10^2$) than computing \LA from \ISA and \NSV, so that the
overall time of \our is larger than \nsv, as shown in Table~\ref{t:exp}.

\begin{table}[t]
\centering
\caption{
Experiments with Pizza \& Chili datasets. 
The running time is reported in seconds for each step of algorithms \our and \nsv.
}
\label{t:steps}
\resizebox{1\textwidth}{!}{
\begin{tabular}{lrr||c|cc|cc|cc}
\toprule
            &   &       & ~~Step 1~~ & \multicolumn{2}{c|}{Step 2} & \multicolumn{2}{c|}{Step 3} & \multicolumn{2}{c}{Step 4} \\
            &   &       &         & \rotatebox{90}{\our~~}             & \rotatebox{90}{\nsv~~}    & \rotatebox{90}{\our}             & \rotatebox{90}{\nsv}    & \rotatebox{90}{\our}     & \rotatebox{90}{\nsv}            \\
            & $\sigma$ & $n/2^{20}$ & \SA    & \BWT             & \ISA    & \LF              & \NSV    & \LA      & \LA             \\
\hline
sources     & 230    & 201   & \p50.27  & \textbf{\p2.28} & \p3.49 & \textbf{0.97} & \p1.65   & \p14.12 &   \textbf{0.13}  \\
dblp        & 97     & 282   & \p79.83  & \textbf{\p4.02} & \p5.51 & \textbf{1.46} & \p1.61   & \p18.80 &   \textbf{0.18} \\
dna         & 16     & 385   &  145.02  & \textbf{\p8.07} & \p9.99 & \textbf{1.48} & \p4.04   & \p43.39 &   \textbf{0.25} \\
english.1gb & 239    & 1,047 &  459.29  & \textbf{ 21.86} &  34.35 & \textbf{4.70} &  10.31   &  127.65 &   \textbf{0.72} \\
proteins    & 27     & 1,129 &  478.13  & \textbf{ 21.96} &  34.99 & \textbf{4.71} &  10.47   &  125.73 &   \textbf{0.75} \\
\hdashline
einstein-de & 117    & 88    & \p28.79  & \textbf{\p1.27} & \p1.91 & \textbf{0.48} & \p0.85   &  \p5.10  &  \textbf{0.06}  \\
kernel      & 160    & 246   & \p68.19  & \textbf{\p3.52} & \p4.92 & \textbf{1.29} & \p2.18   &   27.21  &  \textbf{0.18} \\
fib41       & 2      & 256   & \p85.94  & \textbf{\p5.94} & \p6.55 & \textbf{1.38} & \p0.60   &   26.48  &  \textbf{0.18} \\
cere        & 5      & 440   &  153.89  & \textbf{\p9.30} &  11.20 & \textbf{2.24} & \p4.38   &   49.38  &  \textbf{0.42} \\
\bottomrule
\end{tabular}
}
\end{table}

\section{Balanced parenthesis representation of a Lyndon Array}\label{s:bp}

In this section we introduce a new representation for the Lyndon array $\LA[1,n]$ of
$T[1,n]$ consisting of a balanced parenthesis string of length $2n$. The
existence of this representation is not surprising in view of Observation~3
in~\cite{Franek2016} stating that Lyndon words do not overlap (see also the
bracketing algorithm in~\cite{SawadaR03}). Algorithm~\ref{a:bp-construction}
gives an operational strategy for building such a representation, and the
next lemma shows how to use it to retrieve individual values of \LA.
In the following, given a balanced parenthesis string $S$, we write
$\selo(S,i)$ (resp. $\findc(S,i)$) to denote the position in $S$ of the
$i$-th open parenthesis (resp. the position in $S$ of the closed parenthesis
closing the $i$-th open parenthesis).

\ignore{Given $\LA[1,n]$ its balanced parenthesis representation
$\LABP[1,2n]$ obtained writing for $i=1,\ldots,n$ an open parenthesis
followed by a number of closed parenthesis equal to $i_c = |\{j<i |
j+\LA[j]=i+1\}|$; note $i_c$ is the number of Lyndon word ending at potion
$i$. For example, for $T=\texttt{banana\$}$ it is $\LA = [1, 2, 1, 2, 1,1,1]$
and the corresponding parenthesis representation is $\LABP =
\opa\cpa\opa\opa\cpa\cpa\opa\opa\cpa\cpa\opa\cpa\opa\cpa$.}

\begin{algorithm}[t]
\SetNlSty{textbf}{}{} \SetAlgoLined \SetCommentSty{mycommfont}

$\LABP \leftarrow \varepsilon$

$Stack \leftarrow \emptyset$

\For{$i \leftarrow 1\mbox{ to }n$}{

    \While{$Stack.top() > \ISA[i]$}{
        $Stack.pop()$

        $\LABP.append(\;\cpab\;)$  
    }
    $Stack.push(\ISA[i])$

    $\LABP.append(\;\opab\;)$   

}

$Stack.pop()$

$\LABP.append(\;\cpab\;)$

\caption{Balanced parenthesis representation $\LABP$ from $\ISA$}
\label{a:bp-construction}
\end{algorithm}

\begin{lemma}\label{lemma:bp}

The balanced parenthesis array $\LABP$ computed by
Algorithm~\ref{a:bp-construction} is such that setting for $i=1,\ldots,n$
\begin{equation}\label{eq:openclose}
o_i = \selo(\LABP,i), \qquad c_i = \findc(\LABP,i)
\end{equation}
then
\begin{equation}\label{eq:openclose2}
\LA[i] =  (c_i - o_i + 1) / 2
\end{equation}
\end{lemma}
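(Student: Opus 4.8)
The strategy is to show that Algorithm~\ref{a:bp-construction} is, in essence, a re-encoding of the \NSV computation on \ISA used in Lemma~\ref{l:suffixes}, with the stack tracking exactly the ``visible'' suffixes in lexicographic order. Concretely, I would argue that when the outer loop processes index $i$, the open parenthesis appended at that iteration is $o_i$ (the $i$-th open parenthesis, since each iteration appends exactly one, in order), and that its matching close parenthesis $c_i$ is appended precisely at the iteration that processes index $j=\NSV_{\ISA}[i]$, or by the final flush lines if no such $j\le n$ exists. Granting that claim, the number of parentheses strictly between position $o_i$ and position $c_i$, inclusive, is $2(j-i)$: each index $k$ with $i<k\le j-1$ contributes its own open parenthesis inside this range, and by a balancing/nesting argument each such open parenthesis is also closed inside the range (its matching close cannot escape past $c_i$ because the stack is LIFO and $\ISA[k]>\ISA[i]$ for all such $k$ by definition of $j$). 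Hence $c_i-o_i+1 = 2(j-i)+1$... wait — I want $c_i - o_i + 1$ to be even; I should instead count that there are $j-i-1$ fully-nested pairs strictly inside, plus the pair $(o_i,c_i)$ itself, giving $c_i-o_i+1 = 2(j-i)$, and then use $\LA[i]=j-i$ from Lemma~\ref{l:suffixes} together with $j=\NSV_{\ISA}[i]$ (the identification made right after that lemma) to get $\LA[i]=(c_i-o_i+1)/2$. I will state the even-count version carefully so the arithmetic in \eqref{eq:openclose2} comes out exactly.

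The key steps, in order, are: (1) establish the loop invariant that after iteration $i$ the stack contains exactly the values $\ISA[k]$ for those $k\le i$ such that no later-processed index up to $i$ has a smaller \ISA-value — i.e.\ the current ``left-to-right minima from the right'' among processed suffixes — and that the parenthesis string built so far is balanced with one unmatched open parenthesis per stack element, in stack order; (2) identify $o_i$ as the open parenthesis appended in iteration $i$ (immediate, since iterations append open parentheses one at a time in index order, and the final flush appends only close parentheses); (3) show the matching close parenthesis for $o_i$ is appended exactly when $\ISA[i]$ is popped, which happens at the first later iteration $j$ with $\ISA[j]<\ISA[i]$, i.e.\ $j=\NSV_{\ISA}[i]$, or at the terminal flush if no such $j$ exists (in which case, consistently with $\NSV_{\ISA}[i]=n+1$, we treat $c_i$ as being produced ``for $j=n+1$''); (4) count parentheses in $\LABP[o_i,c_i]$: exactly the pairs $(o_k,c_k)$ for $k\in\{i,i+1,\dots,j-1\}$ lie entirely within this interval — $o_k$ because iteration $k$ falls between iterations $i$ and $j$, and $c_k$ because $\ISA[k]>\ISA[i]$ forces $\ISA[k]$ to be popped no later than $\ISA[i]$ — so $\LABP[o_i,c_i]$ has length $2(j-i)$; (5) combine with $\LA[i]=j-i$.

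The main obstacle is step~(4): arguing that the interval $[o_i,c_i]$ contains \emph{exactly} the pairs for $k=i,\dots,j-1$ and nothing partial. The ``nothing partial'' direction needs the LIFO discipline — since $\ISA[i]$ sits on the stack throughout iterations $i+1,\dots,j-1$, any $\ISA[k]$ pushed in that window sits above it and must be popped before it, so $c_k<c_i$; and $o_k>o_i$ trivially. The reverse inclusion (every such pair is fully inside) is the same observation read the other way. I also need the boundary case $i=n$ handled separately, matching $\LA[n]=1$: here the only index after $n$ is the terminal flush, which pops $\ISA[n]$ immediately after pushing it, so $c_n=o_n+1$ and $(c_n-o_n+1)/2=1$, consistent with the ``$j=n+1$'' reading. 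A clean way to package all of this is to prove step~(1) as an explicit invariant by induction on the loop, then derive steps~(2)--(5) as corollaries; the nesting structure of $\LABP$ is then automatic from the stack semantics, and the only real content is the matching-between-iterations bookkeeping.
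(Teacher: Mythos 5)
Your proposal is correct and follows essentially the same route as the paper's proof: identify $o_i$ with the open parenthesis appended at iteration $i$, show its matching close parenthesis is emitted exactly when $\ISA[i]$ is popped at iteration $\NSV_{\ISA}[i]$ (or at the terminal flush), count one fully nested pair for each $k$ with $i<k<\NSV_{\ISA}[i]$ to get $c_i-o_i+1=2\,\LA[i]$, and treat $i=n$ separately. The paper states the matching and counting more tersely, while you make the loop invariant and the LIFO nesting argument explicit, but the substance is identical.
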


\begin{proof}
First note that at the $i$-th iteration we append an open parenthesis to
$\LABP$ and add the value $\ISA[i]$ to the stack. The value $\ISA[i]$ is
removed from the stack as soon as a smaller element $\ISA[j] < \ISA[i]$ is
encountered. Since the last value $\ISA[n] = 1$ is the smallest element, at
the end of the for loop the stack only contains the value 1, which is removed
at the exit of the loop. Observing that we append a closed parenthesis to
$\LABP$ every time a value is removed from the stack, at the end of the
algorithm $\LABP$ indeed contains $n$ open and $n$ closed parentheses.
Because of the use of the stack, the closing parenthesis follow a first-in
last-out logic so the parenthesis are balanced.

By construction, for $i<n$, the closed parenthesis corresponding to $\ISA[i]$
is written immediately before the open parenthesis corresponding to
$\NSV_{\ISA}[i]$. Hence, between the open and closed parenthesis
corresponding to $\ISA[i]$ there is a pair of open/closed parenthesis for
each entry $k$, $i < k < \NSV_{\ISA}[i]$. Hence, using the
notation~\eqref{eq:openclose} and Lemma~\ref{l:suffixes} it is
$$
c_i - o_i - 1 \;=\; 2(\NSV_{\ISA}[i] -  \ISA[i] -1 )
            \;=\; 2 (\LA[i] - 1).
$$
which implies~\eqref{eq:openclose2}. Finally, for $i=n$ we have $o_n = 2n-1$
and $c_n=2n$, so $(c_n-o_n+1)/2=\LA[n]=1$ and the lemma follows. \qedwhite
\end{proof}

Using the range min-max tree from~\cite{NStalg12} we can represent $\LABP$ in
$2n + o(n)$ bits of space and support $\selo$, and $\findc$ in $O(1)$ time.
We have therefore established the following result.

\begin{theorem}\label{theo:bp}
It is possible to represent the Lyndon array for a text $T[1,n]$ in $2n+o(n)$
bits such that we can retrive every value $\LA[i]$ in $\Oh(1)$ time. \qedwhite
\end{theorem}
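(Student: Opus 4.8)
\textbf{Proof proposal for Theorem~\ref{theo:bp}.}

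The plan is to assemble Theorem~\ref{theo:bp} from two ingredients that are already in place: the combinatorial identity of Lemma~\ref{lemma:bp}, which reduces each query $\LA[i]$ to the two succinct operations $\selo(\LABP,i)$ and $\findc(\LABP,i)$ on the balanced parenthesis string $\LABP$, and an off-the-shelf succinct data structure for balanced parentheses that supports exactly these operations in constant time within $2n+o(n)$ bits. First I would recall that $\LABP$ has length exactly $2n$ and is a well-formed (balanced) parenthesis sequence — this is proved as part of Lemma~\ref{lemma:bp} via the stack discipline — so that the generic machinery applies to it without modification. Then I would invoke the range min-max tree of Navarro and Sadakane~\cite{NStalg12}: built over a balanced parenthesis sequence of length $2n$, it occupies $2n+o(n)$ bits and answers a suite of navigational primitives in $O(1)$ time, among them $\selo$ (the \texttt{select}-open operation, locating the $i$-th open parenthesis) and $\findc$ (the \texttt{findclose}/matching-parenthesis operation). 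These are precisely $o_i$ and $c_i$ in~\eqref{eq:openclose}.

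Putting the pieces together: to recover $\LA[i]$ we compute $o_i = \selo(\LABP,i)$ and $c_i = \findc(\LABP,i)$, each in constant time, and then return $(c_i - o_i + 1)/2$, a single arithmetic step; correctness of this formula is exactly~\eqref{eq:openclose2} of Lemma~\ref{lemma:bp}, which also covers the boundary case $i=n$. The total query time is therefore $O(1)$, and the total space is $2n+o(n)$ bits, since the range min-max tree is the only structure we keep — in particular we do not need to retain the raw $\LABP$ bit-string separately, as the structure of~\cite{NStalg12} stores the sequence itself within that bound. This establishes the claimed bound.

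The only genuine point to be careful about is matching the operation names to those guaranteed by~\cite{NStalg12}: that reference phrases its toolkit in terms of primitives such as \texttt{findclose}, \texttt{rank}, and \texttt{select} on the parenthesis alphabet, and one should note that $\findc$ as defined here — ``the position of the closing parenthesis that matches the $i$-th open parenthesis'' — is the composition of $\selo(\cdot,i)$ with \texttt{findclose}, all of which are $O(1)$-time and covered by the same $o(n)$-bit index. So there is no real obstacle; the work is entirely in citing the succinct-structure result correctly and observing that Lemma~\ref{lemma:bp} has already done the combinatorial reduction. I would keep the proof to two or three sentences, essentially: ``$\LABP$ is a balanced parenthesis string of length $2n$; by~\cite{NStalg12} it admits a $2n+o(n)$-bit representation supporting $\selo$ and $\findc$ in $O(1)$ time; combining with Lemma~\ref{lemma:bp} and~\eqref{eq:openclose2} yields $\LA[i]$ in $O(1)$ time, as claimed.''
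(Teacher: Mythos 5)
Your proposal is correct and matches the paper's argument exactly: the paper likewise combines Lemma~\ref{lemma:bp} with the range min-max tree of~\cite{NStalg12} to get $2n+o(n)$ bits and $O(1)$-time $\selo$ and $\findc$, from which $\LA[i]=(c_i-o_i+1)/2$ follows. No gaps.
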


Since the new representation takes $O(n)$ bits, it is desirable to build it
without storing explicitly \ISA, which takes $\Theta(n)$ words. In
Section~\ref{s:algorithm} we used the $LF$ map to generate the \ISA\ values
right-to-left (from $\ISA[n]$ to $\ISA[1]$). Since in
Algorithm~\ref{a:bp-construction} we need to generate the \ISA\ values
left-to-right, we use the inverse permutation of the $LF$ map, known in the
literature as the $\Psi$ map. Formally, for $i=1,\ldots,n$ $\Psi[i]$ is
defined as
\begin{equation}\label{eq:psidef}
\Psi[i] =
\begin{cases}
\ISA[1]      & \mbox{if $i=1$}\\
\ISA(\SA[i]+1) & \mbox{otherwise}.
\end{cases}
\end{equation}

\begin{lemma}\label{lemma:psi}
Assume we have a data structure supporting the $\sel$ operation on the BWT in
$O(s)$ time. Then, we can generate the values $\ISA[1], \ldots, \ISA[n]$ in
$O(sn)$ time using additional $O(\sigma \log n)$ bits of space.
\end{lemma}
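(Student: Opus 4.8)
The plan is to exploit the identity $\ISA[i] = LF^{n-i}(n)$ established in Section~\ref{s:algorithm}, or rather its inverse formulation via the $\Psi$ map, to walk through the \ISA\ values in left-to-right order. Concretely, since $\Psi$ is the inverse permutation of $LF$ and $\ISA[1] = \Psi[1]$, $\ISA[i+1] = \Psi[\ISA[i]]$, I would start from $\ISA[1]$ and iterate $\Psi$ a total of $n-1$ times, emitting one \ISA\ value per step; this costs $O(n)$ applications of $\Psi$. The remaining work is to show how to evaluate $\Psi$ in $O(s)$ time using only $O(\sigma\log n)$ bits of extra space, given the assumed data structure supporting $\sel$ on the BWT.

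The key observation here is the standard characterization of $\Psi$ in terms of the first and last columns of the BWT matrix $M$: if position $i$ in \SA-order has first symbol $\alpha$, then $\Psi[i]$ is the position in \SA-order of the suffix obtained by dropping that leading $\alpha$, and this is precisely the row of $M$ whose last symbol is the corresponding occurrence of $\alpha$. Writing $C[\alpha]$ for the number of symbols in $T$ strictly smaller than $\alpha$ — an array of $\sigma$ integers, hence $O(\sigma\log n)$ bits — the symbol at position $i$ of the first column is the unique $\alpha$ with $C[\alpha] < i \le C[\alpha+1]$, and then $\Psi[i] = \sel(\L, \alpha, i - C[\alpha])$, i.e. the position of the $(i-C[\alpha])$-th occurrence of $\alpha$ in the BWT. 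The first-column symbol $\alpha$ can be found by a binary search in $C$ in $O(\log\sigma)$ time, which is absorbed into $O(s)$ (or, if $\sigma$ is small, by a direct scan). Thus each $\Psi$ evaluation costs one $\sel$ query plus $O(\log\sigma)$ time, i.e. $O(s)$, and over $n$ steps we get $O(sn)$ total time. The only persistent auxiliary structures beyond the assumed $\sel$-supporting data structure are the array $C$ and a constant number of integer registers, so the extra space is $O(\sigma\log n)$ bits as claimed.

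I would organize the proof in three short steps: first, recall the definition of $\Psi$ from~\eqref{eq:psidef} and verify the recurrence $\ISA[1]=\Psi[1]$ and $\ISA[i+1]=\Psi[\ISA[i]]$ directly from $\Psi = LF^{-1}$ together with $\ISA(\SA[i]) = i$ and $LF(\ISA[i]) = \ISA[i-1]$; second, derive the formula $\Psi[i] = \sel(\L, \alpha, i - C[\alpha])$ where $\alpha$ is the first-column symbol at row $i$, justifying it by the LF-mapping correspondence between equal symbols in the first and last columns; third, assemble the complexity bound by noting that $C$ is precomputed once in $O(n)$ time (or can be assumed available) using $O(\sigma\log n)$ bits, each step is $O(s)$, and there are $n$ steps. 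I expect the main obstacle to be purely expository rather than mathematical: the identity $\Psi[i] = \sel(\L,\alpha,i-C[\alpha])$ is folklore in the BWT literature, so the care needed is in phrasing the correspondence between the $k$-th occurrence of $\alpha$ in $c_F$ and in $c_L$ cleanly enough that the off-by-one in the rank $i - C[\alpha]$ is manifestly correct, and in making explicit that locating $\alpha$ from $i$ via $C$ does not blow up either the time or the space budget.
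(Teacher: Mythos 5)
Your overall strategy coincides with the paper's: iterate $\Psi$ starting from $\ISA[1]=\Psi(1)$ via $\ISA[i]=\Psi(\ISA[i-1])$, and evaluate each $\Psi(i)$ as a select query on \L for the symbol $\alpha$ prefixing row $i$ of the BWT matrix, at occurrence rank $i-C[\alpha]$. Your formula $\sel_\alpha(\L,\,i-C[\alpha])$ is the same as the paper's $\sel_{c_i}(\L,\,i-\sel_1(F,c_i)+1)$, since $\sel_1(F,c_i)=C[\alpha]+1$; the recurrence, the off-by-one, and the space accounting for $C$ are all handled correctly.

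There is, however, one genuine gap: the step where you locate the first-column symbol $\alpha$ by binary search over $C$ in $O(\log\sigma)$ time and assert that this is ``absorbed into $O(s)$.'' That is not justified in general, because $s$ is only the cost of a $\sel$ query on the BWT and may well be $O(1)$ --- which is exactly the regime in which the lemma is applied in Lemma~\ref{lemma:bpfast}, where a constant-time select structure is used to obtain $O(n)$ total time. With your lookup the bound degrades to $O(n(s+\log\sigma))$, i.e.\ $O(n\log\sigma)$ in that application, which is strictly weaker than what the lemma claims. The missing idea is a constant-time replacement for the binary search that still fits in $O(\sigma\log n)$ bits: store the bitvector $F[1,n]$ with $F[j]=1$ iff row $j$ is the first row of the BWT matrix prefixed by some character (equivalently, $j=C[\alpha]+1$ for some $\alpha$), represented as a succinct indexable dictionary~\cite{talg/RamanRS07} in $O(\sigma\log n)$ bits supporting $O(1)$ $\rank/\sel$; then $\alpha=\rank_1(F,i)$ is obtained in constant time and the $O(sn)$ bound follows. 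With that substitution your proof matches the paper's.
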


\begin{proof}
By~\eqref{eq:psidef} it follows that $\ISA[1] = \Psi(1)$ and, for
$i=2,\ldots,n$, $\ISA[i] = \Psi(\ISA[i-1])$. To prove the lemma we need to
show how to compute each $\Psi(i)$ in $O(s)$ time. By definition, $\Psi(i)$
is the position in \L of the character prefixing row $i$ in the conceptual
matrix defining the BWT. Let $F[1,n]$ denote the binary array such that
$F[j]=1$ iff row $j$ is the first row of the BWT matrix prefixed by some
character~$c$. Then, the character prefixing row $i$ is given by $c_i =
\rank_1(F,i)$ and
$$
\Psi(i) = \sel_{c_i}(\L,i-\sel_1(F,c_i)+1).
$$
The thesis follows observing that using~\cite{talg/RamanRS07} we can
represent $F$ in $\log{{n}\choose{\sigma}} + o(\sigma) + o(\log\log n) =
O(\sigma \log n)$ bits supporting constant time $\rank/\sel$ queries. \qedwhite
\end{proof}

\begin{lemma} \label{lemma:bpfast}
Using Algorithm~\ref{a:bp-construction} we can compute $\LABP$ from the BWT
in $O(n)$ time using $O(n)$ words of space.
\end{lemma}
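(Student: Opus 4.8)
The plan is to combine Lemma~\ref{lemma:psi} with a representation of the BWT that supports constant-time $\sel$ and is itself computable in $O(n)$ words of space, and then to bound the cost of Algorithm~\ref{a:bp-construction} by a standard amortized-stack argument.

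First I would build, from the input $\BWT$ string $L[1,n]$, a data structure supporting $\sel_c(L,k)$ in $O(1)$ time: for each alphabet symbol $c$, keep an array listing in increasing order the positions $j$ with $L[j]=c$. A single left-to-right scan of $L$ produces all these arrays in $O(n)$ time and $O(n)$ words of space. Feeding $s=O(1)$ into Lemma~\ref{lemma:psi} then yields the values $\ISA[1],\ISA[2],\ldots,\ISA[n]$ produced one at a time in left-to-right order, at $O(1)$ amortized cost each — which is exactly the order in which the loop of Algorithm~\ref{a:bp-construction} consumes them, so $\ISA$ need never be stored in full. (Alternatively one could first materialize $\LF$ in $O(n)$ time and then fill the whole array $\ISA$ via $\ISA[n]=1$ and $\ISA[i-1]=\LF(\ISA[i])$; the cost bounds are the same.)

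Next I would run Algorithm~\ref{a:bp-construction} on this stream of $\ISA$ values. Its running time is dominated by the total work of the inner \texttt{while} loop, but each value $\ISA[i]$ is pushed onto the stack exactly once and popped at most once over the whole execution, so the number of stack operations, and hence of \texttt{append} operations to $\LABP$, is $O(n)$; the algorithm therefore runs in $O(n)$ time. For space, the stack holds at most $n$ integers ($O(n)$ words), the string $\LABP$ has length $2n$ ($O(n)$ words, or $O(n/\log n)$ words if packed), and the auxiliary structures from Lemma~\ref{lemma:psi} together with the $\sel$ structure use $O(\sigma\log n)+O(n)$ words, for $O(n)$ words in total. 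Correctness of the output — that the $\LABP$ so produced encodes $\LA$ through \eqref{eq:openclose}–\eqref{eq:openclose2} — is precisely Lemma~\ref{lemma:bp}.

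The one point requiring care is obtaining the $\ISA$ values in the order Algorithm~\ref{a:bp-construction} needs (left-to-right) without materializing the array, and this is exactly what the $\Psi$ map and Lemma~\ref{lemma:psi} deliver via $\ISA[1]=\Psi(1)$ and $\ISA[i]=\Psi(\ISA[i-1])$; everything else is the routine amortized analysis of the stack plus bookkeeping of the space terms.
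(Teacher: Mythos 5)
Your proposal is correct and follows essentially the same route as the paper: a constant-time $\sel$ structure over $\L$, Lemma~\ref{lemma:psi} to stream the $\ISA$ values left-to-right without materializing the array, and the amortized count of stack operations (each $\ISA[i]$ pushed once, popped at most once) to bound the running time by $O(n)$. The only difference is that you build the $\sel$ structure explicitly as per-character position lists in $O(n)$ words rather than citing a compressed sequence representation in $O(n\log\sigma)$ bits, which is immaterial here since the lemma already permits $O(n)$ words of space.
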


\begin{proof}
We represent \L using one of the many available data structures taking
$O(n\log\sigma)$ bits and supporting constant time $\sel$ queries
(see~\cite{BCGNNalgor13} and references therein). By Lemma~\ref{lemma:psi} we
can generate the values $\ISA[1], \ldots, \ISA[n]$ in $O(n)$ overall time
using $O(\sigma\log n)$ auxiliary space. Since every other operations takes
constant time, the running time is proportional to the number of stack
operations which is $O(n)$ since each $\ISA[i]$ is inserted only once in the
stack. \qedwhite
\end{proof}

Note that the space usage of Algorithm~\ref{a:bp-construction} is dominated
by the stack, which uses $n$ words in the worst case. Since at any given time
the stack contains an increasing subsequence of $\ISA$, if we can assume that
\ISA is a random permutation the average stack size is  $O(\sqrt{n})$ words
(see~\cite{AD99}).

\ignore{It is easy to see that the maximum number of entries simultaneously
on the stack is the length of the longest decreasing subsequence $\ISA[i_1] >
\ISA[i_2] > \cdots > \ISA[i_k]$ with $i_1 < i_2 < \cdots < i_k$ and with the
additional restriction that for $j=1,\ldots,k$, if $i_1 < h < i_j$ then
$\ISA[h] > \ISA[i_j]$. If we can assume $\ISA$ is a random permutation, the
maximum stack size is $O(\sqrt{n})$ on average (see~\cite{AD99}).
Unfortunately, in the worst case, eg when $\ISA = (n,n-1,\ldots,1])$, the
maximum stack size is $n$.}

We now present an alternative representation for the stack that only uses
$n+o(n)$ bits in the worst case and supports pop and push operations in
$O(\log n/\log\log n)$ time. We represent the stack with a binary array
$S[1,n]$ such that $S[1]=1$ iff the value $i$ is currently in the stack.
Since the values in the stack are always in increasing order, $S$ is
sufficient to represent the current status of the stack. In
Algorithm~\ref{a:bp-construction} when a new element $e$ is added to the
stack we must first delete the elements larger than $e$. This can be
accomplished using rank/select operations. If $r_e = \rank_1(S,e)$ the
elements to be deleted are those returned by $\sel_1(S, r_e+i)$ for
$i=1,2,\ldots,\rank_1(S,n) - r_e$. Summing up, the binary array $S$ must
support the rank/select operations in addition to changing the value of a
single bit. To this end we use the dynamic array representation described
in~\cite{spire/HeM10} which takes $n + o(n)$ bits and support the above
operations in (optimal) $O(\log n/\log\log n)$ time. We have therefore
established, this new time/space tradeoff for Lyndon array construction.

\begin{lemma} \label{lemma:bpminute}
Using Algorithm~\ref{a:bp-construction} we can compute $\LABP$ from the BWT
in $O(n\log n/\log\log n)$ time using $O(n\log\sigma)$ bits of space. \qedwhite
\end{lemma}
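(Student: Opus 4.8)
The plan is to re-run Algorithm~\ref{a:bp-construction} exactly as stated, but to replace the two concrete data structures whose cost we must control: the generator of the $\ISA$ values, and the stack. For the former I would reuse Lemma~\ref{lemma:psi} verbatim. We represent the BWT $\L$ with a data structure supporting $\sel$ queries in $O(\log n/\log\log n)$ time and $O(n\log\sigma)$ bits of space — e.g.\ a wavelet tree, which meets this bound — so that Lemma~\ref{lemma:psi} applied with $s = O(\log n/\log\log n)$ gives us the sequence $\ISA[1],\ldots,\ISA[n]$ on the fly, one value per iteration, in $O(n\log n/\log\log n)$ total time and only $O(\sigma\log n)$ extra bits. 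This already disposes of the ``store $\ISA$ explicitly'' problem that motivated this subsection, and its cost is within the claimed budget.

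Next I would replace the explicit integer stack by the binary array $S[1,n]$ described in the paragraph preceding the lemma: $S[i]=1$ iff the integer $i$ is currently on the stack. The correctness of this substitution rests on the invariant, already noted in the text and visible from Algorithm~\ref{a:bp-construction}, that the stack always holds a strictly increasing sequence of $\ISA$ values; hence the set of its elements together with the natural order determines the stack completely, and $\t{top}$, $\t{push}$, $\t{pop}$ all translate into $\rank_1/\sel_1$ queries and single-bit flips on $S$. Concretely: $\t{top}()$ is $\sel_1(S,\rank_1(S,n))$; pushing $e$ after deleting everything above it means setting the bits at positions $\sel_1(S,r_e+1),\ldots,\sel_1(S,\rank_1(S,n))$ to $0$ (where $r_e=\rank_1(S,e)$) and then setting $S[e]\leftarrow 1$. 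Using the dynamic bitvector of~\cite{spire/HeM10}, $S$ occupies $n+o(n)$ bits and each of these operations costs $O(\log n/\log\log n)$.

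For the running-time bound I would argue as in the proof of Lemma~\ref{lemma:bpfast}: each index $i\in\{1,\ldots,n\}$ is pushed onto the stack exactly once and therefore popped at most once, so the total number of stack operations over the whole run of Algorithm~\ref{a:bp-construction} is $O(n)$; each now costs $O(\log n/\log\log n)$, for $O(n\log n/\log\log n)$ in aggregate. Generating the $\ISA$ values costs the same $O(n\log n/\log\log n)$ by the first paragraph, and appending to $\LABP$ is $O(1)$ amortised per parenthesis (or we simply write $\LABP$ to the output), so the overall time is $O(n\log n/\log\log n)$. The space is $O(n\log\sigma)$ bits for $\L$, $n+o(n)$ bits for $S$, $O(\sigma\log n)$ bits for the auxiliary arrays of Lemma~\ref{lemma:psi}, and $2n$ bits for $\LABP$ — all within $O(n\log\sigma)$ bits, which is asymptotically the space of $T$ itself. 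Correctness of the output is inherited unchanged from Lemma~\ref{lemma:bp}, since we have only changed the implementation of the stack and the source of the $\ISA$ values, not the sequence of abstract operations performed. The one place deserving care — the step I expect to be the main obstacle — is verifying that the amortised $O(n)$ bound on stack operations survives the switch to the $S$-array implementation: a single $\t{push}$ that evicts many elements performs several $\sel_1$ calls, so one must confirm that charging each eviction to the (unique) earlier push of that element keeps the total at $O(n)$, which it does because an element, once evicted, is never reinserted.
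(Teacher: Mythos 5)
Your proof follows the paper's own argument essentially verbatim: generate the $\ISA$ values on the fly via Lemma~\ref{lemma:psi}, replace the explicit stack by the increasing-set bitvector $S[1,n]$ maintained with the dynamic rank/select structure of~\cite{spire/HeM10}, and charge the $O(n)$ amortized stack operations (each element pushed and hence popped at most once) at $O(\log n/\log\log n)$ apiece. One small correction: a plain wavelet tree answers $\sel$ in $O(\log\sigma)$ time, which is \emph{not} $O(\log n/\log\log n)$ when $\sigma$ is large, so you should instead instantiate Lemma~\ref{lemma:psi} with the constant-time $\sel$ structure in $O(n\log\sigma)$ bits already used in Lemma~\ref{lemma:bpfast} (see~\cite{BCGNNalgor13}); with that substitution the stated time and space bounds follow exactly as you argue.
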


Finally, we point out that if the input consists of the text $T[1,n]$ the
asymptotic costs do not change, since we can build the BWT from $T$ in $O(n)$
time and $O(n\log\sigma)$ bits of space~\cite{soda/MunroNN17}.

\begin{theorem} \label{th:bp}
Given $T[1,n]$ we can compute $\LABP$ in $O(n)$ time using $O(n)$ words of
space, or in $O(n\log n/\log\log n)$ time using $O(n\log\sigma)$ bits of
space. \qedwhite
\end{theorem}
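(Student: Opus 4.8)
The plan is to reduce both halves of the statement to Lemmas~\ref{lemma:bpfast} and~\ref{lemma:bpminute}, which already establish the two stated bounds under the assumption that the input is the BWT of~$T$ rather than $T$ itself. The only gap to bridge is therefore the cost of producing the BWT from $T[1,n]$ — together, in the space-restricted case, with a data structure on the BWT supporting the $\sel$ queries required by Lemma~\ref{lemma:psi}.

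For the first bound, the simplest route is to construct the suffix array of $T$ in $O(n)$ time and $O(n)$ words of space, for instance with SACA-K~\cite{Nong2013}, invert it into \ISA in linear time, and run Algorithm~\ref{a:bp-construction} directly on the explicit array \ISA; by the analysis in the proof of Lemma~\ref{lemma:bpfast} this produces $\LABP$ in $O(n)$ time, and the space is $O(n)$ words for the arrays \SA and \ISA plus $O(n)$ words for the stack. (Equivalently one may form $L=T[\SA[\cdot]-1]$, preprocess it in linear time into per-symbol occurrence lists so that $\sel$ on each symbol takes $O(1)$ time, and invoke Lemma~\ref{lemma:bpfast} through the $\Psi$-based left-to-right generation of \ISA; this stays within $O(n)$ words.)

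For the second bound, I would instead build the BWT of $T$ directly, without passing through an explicit suffix array, using the construction of Munro, Navarro and Nekrich~\cite{soda/MunroNN17}, which runs in $O(n)$ time within $O(n\log\sigma)$ bits of working space, and in the same budget equip $L$ with a succinct representation supporting constant-time $\sel$ (see~\cite{BCGNNalgor13}, and~\cite{talg/RamanRS07} for the bitvector $F$ used in Lemma~\ref{lemma:psi}). Applying Lemma~\ref{lemma:bpminute} then yields $\LABP$ in $O(n\log n/\log\log n)$ time using $O(n\log\sigma)$ bits of space, with the output $\LABP$ itself occupying only $2n$ bits before the $o(n)$-bit index of Theorem~\ref{theo:bp} is added.

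I do not anticipate a genuine obstacle: the theorem is a routine composition of the cited BWT-construction results with Lemmas~\ref{lemma:bpfast}--\ref{lemma:bpminute}. The one point that needs a little care is that the auxiliary structures on $L$ (and the array $F$) used by the $\sel$ queries of Lemma~\ref{lemma:psi} must be \emph{built}, not merely stored, within the claimed space bound; in the $O(n)$-word setting this is immediate, and in the $O(n\log\sigma)$-bit setting it follows because the succinct select indexes of~\cite{BCGNNalgor13,talg/RamanRS07} admit linear-time construction in that much space. Once this is observed, both parts of the statement follow.
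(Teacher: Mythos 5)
Your proposal is correct and follows essentially the same route as the paper, which disposes of the theorem in one sentence by noting that the BWT can be built from $T$ in $O(n)$ time within $O(n\log\sigma)$ bits~\cite{soda/MunroNN17} and then invoking Lemmas~\ref{lemma:bpfast} and~\ref{lemma:bpminute}. Your additional remarks (the explicit-\ISA variant for the $O(n)$-word case and the observation that the select indexes must be constructible, not just storable, within the space budget) are sound but only elaborate the same argument.
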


\section{Summary of Results}\label{s:conclusion}

In this paper we have described a previously unknown connection between the
Burrows-Wheeler transform and the Lyndon array, and proposed a corresponding
algorithm to construct the latter during Burrows-Wheeler inversion.
The algorithm is guaranteed linear-time and simple, resulting in the good
practical performance shown by the experiments.

Although not faster than other linear algorithms, our solution was one
of the most space-efficient. In addition, if the input is stored in a
BWT-based self index, our algorithm would have a clear advantage in both
working space and running time, since it is the only one that uses the LF-map
rather than the suffix array.

We also introduced a new balanced parenthesis representation for the
Lyndon array using $2n+o(n)$ bits supporting $O(1)$ time access. We have
shown how to build this representation in linear time using $O(n)$ words of
space, and in $O(n\log n/\log\log n)$ time using asymptotically the same
space as $T$.

Over all the known algorithms surveyed in~\cite{Smyth2017}, probably the
fastest for real world datasets and the most space-efficient is the
folklore \maxlyn algorithm described in \cite{Franek2016}, which makes no use
of suffix arrays and requires only constant additional space, but which
however requires $\Theta(n^2)$ time in the worst-case. We tested \maxlyn
on a string consisting of $10 \times 2^{20}$ symbols `a'. While the
linear-time algorithms run in no more than $0.5$ seconds, \maxlyn takes about
$8$ hours to compute the Lyndon array. Thus, the challenge that remains is
to find a fast and ``lightweight'' worst-case linear-time algorithm for
computing the Lyndon array that avoids the expense of suffix array
construction.

\section*{Acknowledgments}
The authors thank Uwe Baier for kindly providing the source code of
algorithm \baier, and Prof. Nalvo Almeida for granting access to the machine used
for the experiments.

\paragraph{Funding}
F.A.L. was supported by the grant $\#$2017/09105-0 from the São Paulo Research Foundation (FAPESP).
The work of W.F.S. was supported in part by a grant from the Natural Sciences
\& Engineering Research Council of Canada (NSERC). G.M. was supported by the
PRIN grant 201534HNXC and INdAM-GNCS Project {\sl Efficient algorithms for
the analysis of Big Data}. G.P.T. acknowledges the support of CNPq.


\begin{thebibliography}{10}
\expandafter\ifx\csname url\endcsname\relax
  \def\url#1{\texttt{#1}}\fi
\expandafter\ifx\csname urlprefix\endcsname\relax\def\urlprefix{URL }\fi
\expandafter\ifx\csname href\endcsname\relax
  \def\href#1#2{#2} \def\path#1{#1}\fi

\bibitem{CFL58}
K.~T. Chen, R.~H. Fox, R.~C. Lyndon, Free differential calculus {IV} --- the
  quotient groups of the lower central series, Ann.\ Math. 68 (1958) 81--95.

\bibitem{Franek2016}
F.~Franek, A.~S. M.~S. Islam, M.~S. Rahman, W.~F. Smyth, Algorithms to compute
  the lyndon array, in: Proc. PSC, 2016, pp. 172--184.

\bibitem{Hohlweg2003}
C.~Hohlweg, C.~Reutenauer, Lyndon words, permutations and trees, Theor. Comput.
  Sci. 307~(1) (2003) 173--178.

\bibitem{Bannai2014}
H.~Bannai, T.~I, S.~Inenaga, Y.~Nakashima, M.~Takeda, K.~Tsuruta, A new
  characterization of maximal repetitions by lyndon trees, CoRR abs/1406.0263.

\bibitem{Burrows1994}
M.~Burrows, D.~Wheeler, A block-sorting lossless data compression algorithm,
  Tech. rep., Digital SRC Research Report (1994).

\bibitem{Ohlebusch2013}
E.~Ohlebusch, Bioinformatics Algorithms: Sequence Analysis, Genome
  Rearrangements, and Phylogenetic Reconstruction, Oldenbusch Verlag, 2013.

\bibitem{Makinen2015}
V.~M{\"a}kinen, D.~Belazzougui, F.~Cunial, A.~I. Tomescu, Genome-Scale
  Algorithm Design, Cambridge University Press, 2015.

\bibitem{Navarro2016}
G.~Navarro, Compact Data Structures -- A practical approach, Cambridge
  University Press, 2016.

\bibitem{Ferragina2000}
P.~Ferragina, G.~Manzini, Opportunistic data structures with applications, in:
  Proc. FOCS, 2000, pp. 390--398.

\bibitem{Grossi2003}
R.~Grossi, A.~Gupta, J.~S. Vitter, High-order entropy-compressed text indexes,
  in: Proc. SODA, 2003, pp. 841--850.

\bibitem{Manber1993}
U.~Manber, E.~W. Myers, Suffix arrays: A new method for on-line string
  searches, SIAM J. Comput. 22~(5) (1993) 935--948.

\bibitem{Gonnet1992}
G.~H. Gonnet, R.~A. Baeza-Yates, T.~Snider, New indices for text: Pat trees and
  pat arrays, in: Information Retrieval, Prentice-Hall, Inc., 1992, pp. 66--82.

\bibitem{Nong2013}
G.~Nong, Practical linear-time {O(1)}-workspace suffix sorting for constant
  alphabets, ACM Trans. Inform. Syst. 31~(3) (2013) 1--15.

\bibitem{Goto2013}
K.~Goto, H.~Bannai, Simpler and faster lempel ziv factorization, in: Proc. DCC,
  2013, pp. 133--142.

\bibitem{Smyth2017}
F.~Franek, W.~F. Smyth, Unexpected equivalence: Lyndon array \& suffix array,
  Talk at London Stringology Days \& London Algorithmic Workshop (2017).

\bibitem{Diegelmann2016}
C.~Diegelmann, Personal communication (2016).

\bibitem{Baier2016}
U.~Baier, Linear-time suffix sorting - a new approach for suffix array
  construction, in: Proc. CPM, 2016, pp. 23:1--23:12.

\bibitem{Adjeroh2008}
D.~Adjeroh, T.~Bell, A.~Mukherjee, The burrows-wheeler transform: data
  compression, suffix arrays, and pattern matching, Springer, Boston, MA, 2008.

\bibitem{SawadaR03}
J.~Sawada, F.~Ruskey, Generating lyndon brackets.: An addendum to: Fast
  algorithms to generate necklaces, unlabeled necklaces and irreducible
  polynomials over {GF(2)}, J. Algorithms 46~(1) (2003) 21--26.

\bibitem{NStalg12}
G.~Navarro, K.~Sadakane, Fully-functional static and dynamic succinct trees,
  ACM Transactions on Algorithms 10~(3) (2014) article 16.

\bibitem{talg/RamanRS07}
R.~Raman, V.~Raman, S.~R. Satti, Succinct indexable dictionaries with
  applications to encoding \emph{k}-ary trees, prefix sums and multisets, {ACM}
  Trans. Algorithms 3~(4) (2007) 43.

\bibitem{BCGNNalgor13}
J.~Barbay, F.~Claude, T.~Gagie, G.~Navarro, Y.~Nekrich, Efficient
  fully-compressed sequence representations, Algorithmica 69~(1) (2014)
  232--268.

\bibitem{AD99}
D.~Aldous, P.~Diaconis, Longest increasing subsequences: from patience sorting
  to the {Baik-Deift-Johansson} theorem, Bull. Amer. Math. Soc. 36 (1999)
  413--432.

\bibitem{spire/HeM10}
M.~He, J.~I. Munro, Succinct representations of dynamic strings, in: Proc.
  SPIRE, 2010, pp. 334--346.

\bibitem{soda/MunroNN17}
J.~I. Munro, G.~Navarro, Y.~Nekrich, Space-efficient construction of compressed
  indexes in deterministic linear time, in: {SODA}, {SIAM}, 2017, pp. 408--424.

\end{thebibliography}

\end{document}